\newtheorem{proposition}{Proposition}
\begin{document}
%
\title{A Non-Alternating Algorithm for Joint BS-RS Precoding Design in Two-Way Relay Systems}

\author{Ehsan~Zandi, Guido~Dartmann,~\IEEEmembership{Student Member,~IEEE,}
	        Gerd~Ascheid,~\IEEEmembership{Senior Member,~IEEE},
	        Rudolf~Mathar,~\IEEEmembership{Member,~IEEE}
}
\maketitle
\begin{abstract}
Cooperative relay systems have become an active area of research during recent years since they help cellular networks to enhance data rate and coverage. In this paper we develop a method to jointly optimize
precoding matrices for amplify-and-forward relay station  and base station. Our objective is to increase max--min SINR fairness within co-channel users in a cell. The main achievement of this work is avoiding any tedious alternating
optimization for joint design of RS/BS precoders, in order to save complexity. 
Moreover, no convex solver is required in this method. RS precoding is done by transforming the underlying non-convex problem into a system of nonlinear equations which is
then solved using Levenberg-Marquardt algorithm. This method for RS precoder design is guaranteed to converge to a local optimum. For the BS precoder a low-complexity iterative method is proposed. The efficiency of the joint optimization method 
is verified by simulations.
\end{abstract}

\begin{IEEEkeywords}
Amplify-and-forward, max--min fairness, beamforming, power allocation
\end{IEEEkeywords}
\section{Introduction}

\IEEEPARstart{T}{his} paper considers a two-way relay system (TWRS) in which a multiple antenna relay station (RS) assists  a multiple antenna base station (BS) and several single antenna mobile stations (MS) to communicate. 
Such a system is expected to show a better performance in comparison with non-cooperative systems, in terms of both sum-rate and fairness among all users. Normally, the RS must be mounted sufficiently above roof top such that it can serve shadowed users.
Since the relay is not capable of simultaneous reception and transmission, the communication is split into two distinct half-duplex hops. In the first hop which is known as medium access (MAC), all users and the BS
transmit their data symbols towards the RS which amplifies and forwards the received signal in the next hop, known as broadcast (BC). The back propagated self-interference is known at each node and can be easily removed \cite{TaoTSPROC2012}
if all nodes have perfect knowledge of the channel coefficients.
The remaining concern is how to minimize the interference of other users. In this works we aim at maximizing the signal-to-interference-plus-noise ratio (SINR) of the weakest user link subject to some given power constraint. \\
\textbf{Related Work:}
The cooperative communication via an RS was first introduced in systems equipped with one-way relay channels where only one-way transmission (either downlink or uplink) is possible.
In \cite{HavaryTSPROC08}, the authors propose an optimal solution for a system consisting of one transmitting node and one receiving node via multiple one-way single antenna relay station. The bottleneck of 
this system compared to TWRS is that it requires four time hops for a complete bidirectional communication which leads to capacity loss. To overcome this shortcoming, TWRS are 
to reduce the number of hops to two and hence gain double throughput as the former case. Such a cooperative scenario for a single  pair of  users is studied in several works such as \cite{VazeITW2009,HavaryTSPROC2010,JingTSPROC2012,WangTCOM2012,ShahbazPanahiTSPROC2012}.
Later multiuser TWRS (MU-TWRS) with several transmitting pair of users, attracted a big attention as extension of the former one. 
So far, different scenarios are known as MU-TWRS. For instance \cite{SchadCAMSAP2011} and \cite{BournakaGC2011} consider multiple single antenna relays, while in \cite{TaoTSPROC2012}, \cite{ChenTWCCOM2009} and \cite{ZhangWSA2012} 
the system is equipped with one multiple antenna RS. One frequently used objective function in these works is to maximize equalized SINR among users \cite{TaoTSPROC2012}, \cite{SchadCAMSAP2011} and \cite{BournakaGC2011}. 
The corresponding optimization problem for finding the RS precoding matrix is known as max--min SINR optimization which is non-convex and $\mathcal{NP}$-hard in general \cite{GershmanMagazine10}. Most proposed algorithms are oriented around convexifying this problem
by means of various relaxation/approximation methods \cite{TaoTSPROC2012}. While the aforementioned works regard the so-called symmetric cooperative relay systems, many other works investigate the asymmetric case in which
a multiple antenna BS (and RS) and several single antenna users exist \cite{WanCommLett2013,WangTWCCOM2012,Sun2012,ZhangICASSP2011,Toh2009}.
None of these works offer any optimal solutions and come up with sub-optimal methods. In \cite{ZhangICASSP2011} and \cite{Toh2009} authors propose closed-form solutions based on zero-forcing (ZF). But, ZF is known
to impose constraint on the number of antennas on the system while attaining optimal solution only in very high signal-to-noise (SNR) regimes. On the other hand,  \cite{WanCommLett2013,WangTWCCOM2012,Sun2012} 
incorporate an alternating method into their algorithms which increases the computational complexity.

{\bf Contribution:} Most recent works, i.e., \cite{WanCommLett2013,WangTWCCOM2012,Sun2012}, handle the BS-RS precoding via alternating optimization. Alternating means that the problem is chopped into two subproblems, one for BS and another for RS precoder. One of the problems is solved while
the other precoder is  fixed. The solution is then used as fixed input for the other problem. This whole process continues iteratively to find a suboptimal solution. It is known that the max--min fairness problem for RS is non-convex 
\cite{TaoTSPROC2012}. Nevertheless, the optimization problem of BS precoder is convex, the overall alternating problem becomes non-convex and hence not convergent to the global optimum, in general. 
More importantly, the alternating scheme is very slow. To overcome this shortcoming, 
we isolate two subproblems and propose a non-alternating method which offers a suboptimal low-complexity solution to the joint design problem. Our method for BS precoder design requires no knowledge 
of the RS precoder since it only tries to maximize the upper bound on SINR of the latter. Then the resultant BS precoder is used as fixed matrix within the RS problem. The main benefit of the proposed method is that it induces
no outer (alternating) iterations on the design algorithm and a joint design is carried out in one iteration.
The RS precoder can be computed right away after BS precoder design at BS and be transmitted as side information along with data symbols. 
More importantly, most existing methods for BS/RS precoders utilize optimization tools which are in turn non real-time and prolong the process of joint optimization, exceedingly. 
In this paper, we convert RS precoder problem into a system of nonlinear equations and then solve it iteratively by Levenberg-Marquardt (LM) method. Elaborate explanation of this method can be found in \cite{DartmannTVT13} in which
convergence is proven. On the other hand, design of BS precoder is converted to a non-convex problem which is solved fast by introducing proper approximations. 
The rest of the paper is organized as follows. In Section \ref{sec:system_model} the system model and configuration is described. Section \ref{sec:Optimization_RS} formulates the optimization problem for
RS precoding design while Section \ref{sec:Optimization_BS} discusses the proposed algorithm for BS precoding problem. Numerical results along with complexity analysis are given in Section \ref{sec:simulation}.
Finally, Section \ref{sec:conclusion} concludes this paper.

{\bf Notation:} Along this paper, upper and lower case boldface symbols denote matrix and vector, respectively. The $i^\text{th}$ row (or column) of matrix $\mathbf{A}$ is shown by $[\mathbf{A}]_{i,:}~(\text{or}~[\mathbf{A}]_{:,i})$
while $[\mathbf{A}]_{i,j}$ refers to the element with indices $i$ and $j$. The identity matrix of size $m\times m$ is denoted by $\mathbf{I}_m$. Operators $(\;\cdot\;)^T$, and $(\;\cdot\;)^H$, $(\;\cdot\;)^*$ and $\otimes$ stands for
transpose, hermitian, complex conjugate and Kronecker product, while functions $\mathcal{E}(\;\cdot\;)$, $\mathcal{H}(\;\cdot\;)$ and $\operatorname{Tr}(\;\cdot\;)$ refer to expected value, harmonic mean and trace, respectively. 
Also magnitude of a complex number and the Euclidean norm of vector (or matrix) are shown by  $|\;\cdot\;|$, and $||\;\cdot\;||$. The vectorized version of a matrix $\mathbf{A}$ which is obtained by stacking its columns  into a single column and is denoted by $\operatorname{vec}(\mathbf{A})$. The $i^\text{th}$ eigenvalue and a vector of all eigenvalues of matrix $\mathbf{A}$ is shown by $\lambda_i(\mathbf{A})$ and $\boldsymbol{\lambda}(\mathbf{A})$, respectively.
\section{Data Model and System Setup} \label{sec:system_model}
This paper considers a system consisting of one BS equipped with $N_b$ antennas, one RS with $N_r$ antenna and $ N_u\geq 1$ users ($N_u \leq N_r$). For simplicity we choose
$N_u=N_b$ and assume there exits no LoS or specular link between BS and users. We also assume that channel state information (CSI) is perfectly known at each node. 
Figure \ref{system_fig} depicts the setting of the considered system including all channels and precoding matrices. In MAC phase the received signal at the RS is determined by
\begin{equation}
 \mathbf{r}_R=\mathbf{H}_2 \mathbf{W} \mathbf{x}_2+\mathbf{H}_1\mathbf{T}_1\mathbf{x}_1+\mathbf{n}_R.
\end{equation}
where $\mathbf{W}\in\mathbb{C}^{N_b\times N_u}$ is the BS precoding matrix. Throughout the paper we refer to user and BS with indices 1 and 2, respectively. $\mathbf{H}_2\in\mathbb{C}^{N_r\times N_b}$ and
$\mathbf{H}_1\in\mathbb{C}^{N_r\times N_u}$ denote the channel from BS to RS and users to RS. More precisely, the $i^\text{th}$ column of $\mathbf{H}_2$ represents the channel from  $i^\text{th}$
antenna element of BS towards the RS, while $[\mathbf{H}_1]_{i,j}$ corresponds to propagation channel between $j^\text{th}$ user and $i^\text{th}$ antenna of the RS. 
The matrix $\mathbf{T}_1=\operatorname{diag}(\sqrt{P_{1}}, \ldots, \sqrt{P_{N_u}})$ notifies the transmission power at MS side with $P_{i}$ being power constraint per MS. For simplicity this paper constrains
 all MSs to have the same power budget, i.e., $\mathbf{T}_1=\sqrt{\frac{P_U}{N_u}}\,\mathbf{I}_{N_u}$ where $P_U$ is a given sum-power constraint for the users.
We assume data symbols $\mathbf{x}_k \in\mathbb{C}^{N_u\times 1}$ at both ends are normalized, independent and identically distributed (i.i.d) such that 
$\mathcal{E}(\mathbf{x}_k\mathbf{x}^H_k)=\mathbf{I}_{N_u},\;k\in\{1,2\}$. It is necessary to mention that $\mathbf{x}_1=[x_1, \ldots, x_{N_u}]^T$ and $x_i$ is a transmitted symbol from $i^\text{th}$
MS. Furthermore, the sum power constraint at BS equals to $P_B$, that is  $\operatorname{Tr}(\mathbf{W}^H\mathbf{W})\leq P_B$. The additive noise $\mathbf{n}_R\sim \mathcal{CN}(0,\sigma^2_R\mathbf{I}_{N_r})$ at RS,
is assumed to be zero-mean circular complex Gaussian noise (CCGN), i.e.,
$\mathcal{E}(\mathbf{n}_R\mathbf{n}^H_R)=\sigma^2_R\mathbf{I}_{N_r}$. 
After performing amplify-and-forward precoding, the RS broadcasts the signal $\mathbf{s}_R=\boldsymbol{\Omega} \mathbf{r}_R$ in BC phase,
where $\boldsymbol{\Omega}\in\mathbb{C}^{N_r\times N_r}$ is the corresponding RS precoding matrix. The BS and the users receive the disturbed signals \eqref{downlink-complete}
and \eqref{uplink-complete}:
\begin{align}
\text{MS:}~(k=1) \hspace{.3cm} \mathbf{r}_1&=\mathbf{H}_1^T \mathbf{s}_R+\mathbf{n}_1 
=\underbrace{\mathbf{H}_1^T}_{\mathbf{A}_1}\boldsymbol{\Omega}\underbrace{\mathbf{H}_2\mathbf{W}}_{\mathbf{B}_1}\mathbf{x}_2+\mathbf{H}_1^T \boldsymbol{\Omega}\underbrace{\mathbf{H}_1\mathbf{T}_1}_{\mathbf{C}_1}\mathbf{x}_1+\mathbf{H}_1^T \boldsymbol{\Omega}\mathbf{n}_R+\mathbf{n}_1 \label{downlink-complete} \\ 
\text{BS:}~(k=2) \hspace{.3cm} \mathbf{r}_2&=\mathbf{H}_2^T \mathbf{s}_R+\mathbf{n}_2 
=\underbrace{\mathbf{H}_2^T}_{\mathbf{A}_2}\boldsymbol{\Omega}\underbrace{\mathbf{H}_1\mathbf{T}_1}_{\mathbf{B}_2}\mathbf{x}_1+\mathbf{H}_2^T \boldsymbol{\Omega}\underbrace{\mathbf{H}_2\mathbf{W}}_{\mathbf{C}_2}\mathbf{x}_2+\mathbf{H}_2^T \boldsymbol{\Omega}\mathbf{n}_R+\mathbf{n}_2. \label{uplink-complete}
 \end{align}
$\mathbf{n}_2\in\mathbb{C}^{N_b\times 1}$ describes the additive noise at BS antennas, while $\mathbf{n}_1\in\mathbb{C}^{N_u\times 1}$ is built by concatenating additive noise terms at each MS into a single vector. Both noise terms $\mathbf{n}_1, \mathbf{n}_2 \sim \mathcal{CN}(0,\sigma^2\mathbf{I}_{N_u})$ have zero-mean and variance $\sigma^2$, that is $\mathcal{E}(\mathbf{n}_1 \mathbf{n}^H_1)=\mathcal{E}(\mathbf{n}_2 \mathbf{n}^H_2)=\sigma^2 \mathbf{I}_{N_u}$.
In order to have a compact representation we rephrase \eqref{downlink-complete} and \eqref{uplink-complete} into the following form:
\begin{align}
\mathbf{r}_k&=\mathbf{A}_k \boldsymbol{\Omega} \mathbf{B}_k \mathbf{x}_{\overline{k}}+\mathbf{A}_k \boldsymbol{\Omega} \mathbf{C}_k \mathbf{x}_{k} 
+\mathbf{A}_k \boldsymbol{\Omega}\mathbf{n}_R+\mathbf{n}_k  ,\;\; k \in\{1,2\}, \overline{k} \neq k.
\end{align}
It is clear that terms including $\mathbf{B}_k$ include useful signal while terms with $\mathbf{C}_k$ represent interference. Note that the BS is aware of  
its back-propagated signal, i.e., $\mathbf{x}_2$ as well as all channels. Thus, the BS can remove this self-interference which leads to $\mathbf{C}_2=\boldsymbol{0}$. Similarly, each MS can remove its self-interference term which leads to equation \eqref{SINR1}  as the SINR for each MS stream.
\begin{align}\label{SINR1}
 \gamma_i^k&=\frac{|[\mathbf{A}_k\,\boldsymbol{\Omega}\,\mathbf{B}_k]_{i,i}|^2}{\sum_{\substack{j=1\\j\neq i}}^{N_u} |[\mathbf{A}_k\,\boldsymbol{\Omega}
\,\mathbf{B}_k]_{i,j}|^2+\sum_{\substack{j=1\\j\neq i}}^{N_u} |[\mathbf{A}_k\,\boldsymbol{\Omega}\,\mathbf{C}_k]_{i,j}|^2
+\sigma^2_R\,||[\mathbf{A}_k\, \boldsymbol{\Omega}]_{i,:}||^2+\sigma^2}.
\end{align}

\section{Optimization Problem} 
\subsection{Optimization at RS}  \label{sec:Optimization_RS}
First, we consider fixed $\mathbf{W}$ and optimize the max--min fairness problem over $\boldsymbol{\Omega}$. In the next section we will perform autonomous BS precoding regardless of the choice of $\boldsymbol{\Omega}$. For simplicity of notation, we merge uplink and downlink SINRs into one set, i.e.,
 $\{\gamma_1,\ldots,\gamma_{2N_u}\}=\{\gamma^1_1,\ldots\gamma^1_{N_u},\gamma^2_1,\ldots\gamma^2_{N_u}\},\;i\in \mathcal{A},\; j\in \mathcal{S}$,
and define sets $\mathcal{A}=\{{1,\ldots,2N_u}\}$ and $j\in \mathcal{S}=\{1,\ldots N_u\}$. Thus, the max--min fairness problem can be expressed by the following optimization problem:
\begin{align} \label{MMF-RS1}
\gamma^*=& \max_{\boldsymbol{\Omega}}\; \min_{i \in \mathcal{A}} \;\; \gamma_i \\
&\text{s.t.} \hspace{0.4cm} \operatorname{Tr}(\boldsymbol{\Omega}\,\mathbf{Y}\,\boldsymbol{\Omega}^H) \leq P_R, \nonumber
\end{align}
where $\mathbf{Y}\,=\mathbf{H}_2\,\mathbf{W}\,\mathbf{W}^H\,\mathbf{H}_2^H+\frac{P_U}{N_u}\mathbf{H}_1\,\mathbf{H}_1^H+\sigma^2\,\mathbf{I}_{N_r}$ and $P_R$ is the maximum allowed transmit power at RS. This problem is non-convex due to non-convexity of its objective function. 

Now, we rewrite \eqref{MMF-RS1} into a simpler equivalent form as in \cite{DartmannTVT13}. Let us define $\boldsymbol{\omega}=\operatorname{vec}(\boldsymbol{\Omega})$ and matrices
$\mathbf{N}^k_i=(\sigma^2_R\mathbf{I}_{N_r})\otimes([\mathbf{A}_k]_{i,:}^H[\mathbf{A}_k]_{i,:})$,
$\mathbf{Q}^k_{ij}=(\mathbf{q}^k_{ij})(\mathbf{q}^k_{ij})^H,\;\mathbf{q}^k_{ij}=\big([\mathbf{B}^k]_{:,j}^T\otimes[\mathbf{A}^k]_{i,:}\big)^H$, 
$\mathbf{S}^k_{ij}=(\mathbf{s}^k_{ij})(\mathbf{s}^k_{ij})^H,\;\mathbf{s}^k_{ij}=\big([\mathbf{C}^k]_{:,j}^T\otimes[\mathbf{A}^k]_{i,:}\big)^H$, 
$\mathbf{Q}^k_{i}=\mathbf{Q}^k_{ii}$ and 
$\mathbf{P}^k_{i}=\mathbf{N}^k_i+\sum_{\substack{j=1\\j\neq i}}^{N_u}\mathbf{S}^k_{ij}+\sum_{\substack{j=1\\j\neq i}}^{N_u}\mathbf{Q}^k_{ij},\;\; i,j \in \mathcal{S}$.
\newline
After removing indices $k$, i.e., $\{\mathbf{Q}_{1},\ldots,\mathbf{Q}_{2N_u}\}=\{\mathbf{Q}^1_{1},\ldots,\mathbf{Q}^1_{N_u},\mathbf{Q}^2_{1},\ldots,\mathbf{Q}^2_{N_u}\}$ and also similarly $\{\mathbf{P}_{1},\ldots,\mathbf{P}_{2N_u}\}=\{\mathbf{P}^1_{1},\ldots,\mathbf{P}^1_{N_u},\mathbf{P}^2_{1},\ldots,\mathbf{P}^2_{N_u}\}$, the objective function in \eqref{MMF-RS1} can be re-written as:
\begin{align} \label{MMF-RS2}
 \gamma^*&=\max_{\boldsymbol{\omega}}\;\;  \min_{\substack{i \in \mathcal{A}}}  
\frac{\boldsymbol{\omega}^H\mathbf{Q}_{i}\boldsymbol{\omega}}{\boldsymbol{\omega}^H\mathbf{P}_i\boldsymbol{\omega}+\sigma^2} \\
&\text{s.t.} \hspace{0.4cm} \boldsymbol{\omega}^H\,\mathbf{Z}\,\boldsymbol{\omega} \leq P_R, \nonumber
\end{align}
where $\mathbf{Z}=\mathbf{Y}^T\otimes\mathbf{I}_{N_r}$. For detailed explanation of the above transformation, we encourage readers to see \cite{DartmannTVT13}. The objective function \eqref{MMF-RS2} is non-convex and $\mathcal{NP}$-hard in general, since it is a quadratic fractional program and makes  
our problem non-convex \cite{Palomar10}. This problem can be solved using semidefinite relaxation (SDR) \cite{Bengtsson02} using the fact 
$\boldsymbol{\omega}^H\mathbf{Q}_{i}\boldsymbol{\omega}=\operatorname{Tr}(\mathbf{Q}_{i}\boldsymbol{\omega}\boldsymbol{\omega}^H)$. Let $\mathbf{X}=\boldsymbol{\omega}\boldsymbol{\omega}^H$ and assume $\operatorname{rank}(\mathbf{X})=1$, then \eqref{MMF-RS2} can be solved by bisection over $\gamma$ as  a feasbility check problem. For each value of $\gamma$ the following semidefinite program (SDP) must be solved:
\vspace{-0.2cm}
\begin{align} \label{MMF-SDP-RS}
& \operatorname{find}\;\; \mathbf{X}  \\
\vspace{-0.2cm}
&\text{s.t.} \hspace{0.4cm} \operatorname{Tr}\big(\mathbf{Z}\mathbf{X}\big) \leq P_R,
\hspace{0.2cm} \operatorname{Tr}\big((\frac{1}{\gamma}\mathbf{Q}_{i}-\mathbf{P}_i)\mathbf{X}\big) \geq \sigma^2,\hspace{0.2cm} \mathbf{X} \succeq 0, \nonumber
\end{align}
where $\mathbf{X} \succeq 0$ means that $\mathbf{X}$ is positive semidefinite. 
If $\operatorname{rank}(\mathbf{X})=1$ the problem \eqref{MMF-SDP-RS} is solved optimally, 
otherwise other techniques such as randomization \cite{LuoMagazin10} and dominant eigenvector decomposition  can be applied \cite{KARIPIDIS08}. 
Now, we solve the problem \eqref{MMF-RS2} with a low-complex method presented in \cite{DartmannTVT13} which 
takes advantage of minimax inequality to transform \eqref{MMF-RS2} into a system of non-linear equations and can be interpreted as a least-squares problem. In \cite{DartmannTVT13}, the system of 
equations is solved by LM method in which bisection over SINR is utilized to comply with the power constraint. In \cite{DartmannTVT13}, it is proved that
problem \eqref{MMF-RS2} is upper-bounded by \eqref{minimax_upper_bound}. This bound is used in the next section for BS precoding design.
\begin{proposition} \cite{DartmannTVT13} \label{proposition_minimax_upper_bound}
Let $\tilde{\mathbf{P}}_i=\mathbf{P}_i+\frac{\sigma^2}{P_R}\mathbf{Z}$, then problem \eqref{MMF-RS2} is upper bounded by 
\begin{align} \label{minimax_upper_bound}
\hat{\gamma} = \min_{i \in \mathcal{A}} \;\; \lambda_{\text{max}} \big(\tilde{\mathbf{P}}_i^{-1} \mathbf{Q}_i\big)=\min_{i \in \mathcal{A}} \;\; 
\lambda_{\text{max}} \big(\tilde{\mathbf{P}}_i^{-1/2} \mathbf{Q}_i \tilde{\mathbf{P}}_i^{-1/2}\big).
\end{align}
\end{proposition}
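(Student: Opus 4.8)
The plan is to upper bound the objective of \eqref{MMF-RS2} pointwise in $\boldsymbol{\omega}$, then remove the power constraint by exploiting scale invariance, and finally combine the minimax inequality with the Courant--Fischer (generalized Rayleigh quotient) characterization of $\lambda_{\text{max}}$. First I would observe that every $\boldsymbol{\omega}$ feasible for \eqref{MMF-RS2} satisfies $\boldsymbol{\omega}^H\mathbf{Z}\boldsymbol{\omega}\le P_R$, so that $\sigma^2\ge\frac{\sigma^2}{P_R}\boldsymbol{\omega}^H\mathbf{Z}\boldsymbol{\omega}$ and hence the denominator obeys $\boldsymbol{\omega}^H\mathbf{P}_i\boldsymbol{\omega}+\sigma^2\ge\boldsymbol{\omega}^H\big(\mathbf{P}_i+\tfrac{\sigma^2}{P_R}\mathbf{Z}\big)\boldsymbol{\omega}=\boldsymbol{\omega}^H\tilde{\mathbf{P}}_i\boldsymbol{\omega}$. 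Since each $\mathbf{Q}_i$ is positive semidefinite the numerator $\boldsymbol{\omega}^H\mathbf{Q}_i\boldsymbol{\omega}$ is nonnegative, so shrinking the denominator cannot decrease any of the $2N_u$ ratios (for $\boldsymbol{\omega}\neq\boldsymbol{0}$, which is all that matters since the minimum vanishes at $\boldsymbol{\omega}=\boldsymbol{0}$). Taking the minimum over $i\in\mathcal{A}$ and then the maximum over the feasible set yields
\begin{equation*}
\gamma^*\;\le\;\max_{\boldsymbol{\omega}:\;\boldsymbol{\omega}^H\mathbf{Z}\boldsymbol{\omega}\le P_R}\;\min_{i\in\mathcal{A}}\;\frac{\boldsymbol{\omega}^H\mathbf{Q}_i\boldsymbol{\omega}}{\boldsymbol{\omega}^H\tilde{\mathbf{P}}_i\boldsymbol{\omega}}.
\end{equation*}

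Next I would discard the constraint. The new objective is invariant under $\boldsymbol{\omega}\mapsto t\boldsymbol{\omega}$ for any $t\neq 0$, and since $\mathbf{Y}\succ\mathbf{0}$ (it contains the summand $\sigma^2\mathbf{I}_{N_r}$) we have $\mathbf{Z}=\mathbf{Y}^T\otimes\mathbf{I}_{N_r}\succ\mathbf{0}$, so every $\boldsymbol{\omega}\neq\boldsymbol{0}$ can be rescaled into the power ball without affecting the objective; therefore the constrained maximum equals the unconstrained supremum over all nonzero $\boldsymbol{\omega}$. Applying the minimax inequality $\sup_{\boldsymbol{\omega}}\min_{i}(\cdot)\le\min_{i}\sup_{\boldsymbol{\omega}}(\cdot)$, which holds unconditionally because $\mathcal{A}$ is finite, gives $\gamma^*\le\min_{i\in\mathcal{A}}\sup_{\boldsymbol{\omega}\neq\boldsymbol{0}}\frac{\boldsymbol{\omega}^H\mathbf{Q}_i\boldsymbol{\omega}}{\boldsymbol{\omega}^H\tilde{\mathbf{P}}_i\boldsymbol{\omega}}$. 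Finally, $\tilde{\mathbf{P}}_i\succ\mathbf{0}$ admits a Hermitian square root $\tilde{\mathbf{P}}_i^{1/2}$, and the change of variable $\boldsymbol{\nu}=\tilde{\mathbf{P}}_i^{1/2}\boldsymbol{\omega}$ turns the inner supremum into the standard Rayleigh quotient of $\tilde{\mathbf{P}}_i^{-1/2}\mathbf{Q}_i\tilde{\mathbf{P}}_i^{-1/2}$, whose value is $\lambda_{\text{max}}\big(\tilde{\mathbf{P}}_i^{-1/2}\mathbf{Q}_i\tilde{\mathbf{P}}_i^{-1/2}\big)=\lambda_{\text{max}}\big(\tilde{\mathbf{P}}_i^{-1}\mathbf{Q}_i\big)$, the last equality holding because the two matrices are similar. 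Chaining the inequalities produces $\gamma^*\le\hat{\gamma}$, which is exactly \eqref{minimax_upper_bound}.

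The calculations above are routine; the steps that genuinely need care are (i) verifying $\tilde{\mathbf{P}}_i\succ\mathbf{0}$, which guarantees $\boldsymbol{\omega}^H\tilde{\mathbf{P}}_i\boldsymbol{\omega}>0$ for $\boldsymbol{\omega}\neq\boldsymbol{0}$ and that $\tilde{\mathbf{P}}_i^{-1/2}$ exists, so that the ratios and the eigenvalue substitution are legitimate --- this follows from the $\sigma^2\mathbf{I}_{N_r}$ term in $\mathbf{Y}$ propagating into $\mathbf{Z}$ and thus into $\tilde{\mathbf{P}}_i$ --- and (ii) justifying that enlarging the feasible set of the maximization from the power ball to all nonzero $\boldsymbol{\omega}$ does not change its value. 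I expect (ii) to be the main conceptual obstacle: the additive $\sigma^2$ in \eqref{MMF-RS2} is precisely what destroys homogeneity, and replacing it by $\tfrac{\sigma^2}{P_R}\boldsymbol{\omega}^H\mathbf{Z}\boldsymbol{\omega}$ is the move that restores scale invariance and decouples the constrained nonconvex program into $2N_u$ independent generalized eigenvalue problems. Everything else is the minimax inequality together with Courant--Fischer.
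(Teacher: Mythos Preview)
Your argument is correct and is precisely the standard route: bound the additive $\sigma^2$ by $\tfrac{\sigma^2}{P_R}\boldsymbol{\omega}^H\mathbf{Z}\boldsymbol{\omega}$ on the feasible set to obtain a homogeneous generalized Rayleigh quotient, drop the constraint by scale invariance, swap $\max$ and $\min$ via the minimax inequality, and identify each inner supremum as $\lambda_{\text{max}}\big(\tilde{\mathbf{P}}_i^{-1/2}\mathbf{Q}_i\tilde{\mathbf{P}}_i^{-1/2}\big)=\lambda_{\text{max}}\big(\tilde{\mathbf{P}}_i^{-1}\mathbf{Q}_i\big)$ by similarity. The paper does not spell this out but simply defers to \cite{DartmannTVT13}, whose derivation is exactly the minimax--Rayleigh-quotient argument you have reproduced, so there is nothing to contrast.
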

\begin{proof}
This proposition can be easily proved based on results in \cite{DartmannTVT13}.
Hereafter, we refer to $\hat{\gamma}$ as minimax bound of probelm \eqref{MMF-RS2}. 
\end{proof}

\subsection{Optimizing the BS precoder} \label{sec:Optimization_BS}
Before introducing our proposed method, we need to formulate BS precoding problem similar to \eqref{MMF-SDP-RS} to be solved by SDP-bisection technique (for a given $\boldsymbol{\Omega}$) within
alternating method. Let $\mathbf{F}=\mathbf{H}^T_1\,\boldsymbol{\Omega}\,\mathbf{H}_2$, $\boldsymbol{\theta}=\operatorname{vec}(\mathbf{W})$, 
$\sigma^2_{w,i}=\big((\sum_{\substack{j=1\\j\neq i}}^{N_u} |[\mathbf{H}^T_1\,\boldsymbol{\Omega}\,\mathbf{H}_1\mathbf{T}_1]_{i,j}|^2)+\sigma^2_R\,||[\mathbf{H}^T_1\,\boldsymbol{\Omega}]_{i,:}||^2+\sigma^2\big)$,
$\tilde{\mathbf{I}}_j=([\mathbf{I_{N_u}}]_{j,:})^T[\mathbf{I_{N_u}}]_{j,:}$, $\tilde{\mathbf{F}}_i=([\mathbf{F}]_{i,:})^H[\mathbf{F}]_{i,:}$ and $\mathbf{X}=\boldsymbol{\theta}\boldsymbol{\theta}^H$.
Then, after similar operations as in Section \ref{sec:Optimization_RS}, the BS precoding design can be done similar to \eqref{MMF-SDP-RS}. Note that in the current problem $\mathbf{Z}$, $\mathbf{P}_i$ and $\mathbf{Q}_i$ are replaced with $\mathbf{I}_{N_b}$,
$\mathbf{D}_{i}=\sum_{\substack{j=1\\j\neq i}}^{N_u}\tilde{\mathbf{I}}_j \otimes \tilde{\mathbf{F}}_i$ and $\mathbf{C}_{i}=\tilde{\mathbf{I}}_i \otimes \tilde{\mathbf{F}}_i$, respectively. 
Also the power constraint in this case is $P_B$ and $\sigma^2$ must be replaced with $\sigma^2_{w,i}$ as well.
Now, we start to develop our proposed method. Let $[\mathbf{A}_1]_{i,:}=\mathbf{a}_i$ and $\mathbf{b}_j=([\mathbf{B}_1]_{:,j})^T=([\mathbf{H}_2\mathbf{W}]_{:,j})^T=(\mathbf{H}_2\mathbf{w}_j)^T$, where $\mathbf{w}_j=[\mathbf{W}]_{:,j}
,\; i,j \in \mathcal{S}$, then 
$\mathbf{q}^1_{ij}=\mathbf{b}_j^H \otimes \mathbf{a}^H_i$ which results in $\mathbf{Q}^1_{ij}=(\mathbf{b}_j^H\mathbf{b}_j) \otimes (\mathbf{a}^H_i\mathbf{a}_i)$. 
According to Proposition \ref{proposition_minimax_upper_bound}, problem \eqref{MMF-RS2} is upper bounded by \eqref{minimax_upper_bound}, so we aim at maximizing $\min_{i \in \mathcal{A}} \; \lambda_{\text{max}} \big(\tilde{\mathbf{P}}_i^{-1} \mathbf{Q}_i\big)$.
Roots of the characteristic polynomial of $\tilde{\mathbf{P}}_i^{-1} \mathbf{Q}_i$ and $\tilde{\mathbf{P}}_i^{-1/2} \mathbf{Q}_i \tilde{\mathbf{P}}_i^{-1/2}$ for a given $i$ are identical 
since they are similar matrices \cite{Meyer2000}. 
Hence, the first one inherits the property of positive semidefiniteness from the latter. 
Notice that $\mathbf{Y}$ (and hence $\mathbf{Z}$) is  the only relevant term in determining uplink SINR which is affected by $\mathbf{W}$. Thus, we
expect negligible influence of BS precoding on uplink SINR. This is also evidenced by simulations. Also, unlike downlink, the back propagated 
interference from other users can be removed in the uplink. Consequently, the maximum eigenvalues of corresponding matrices  in downlink are likely to
be smaller than those of the uplink. Taking this reasoning into consideration, we hope that maximizing the corresponding maximum eigenvalues in downlink
leads to an increase in minimax bound. Let $\mathcal{S}'=\{1,\ldots,N_u\}$ denote  all indices relevant to downlink,
e.g., $\mathbf{Q}_i,\;i \in \mathcal{S}'$ refers to  $\mathbf{Q}^1_i,\;i \in \mathcal{S}$.  We know that $\mathbf{Q}_i,~\mathbf{P}_i$ and 
$\mathbf{b}_i$ are functions of $\mathbf{W}$, but for simpler representation we omit to write them as $ \mathbf{Q}_i(\mathbf{W})$, $ \mathbf{P}_i(\mathbf{W})$ and  $\mathbf{b}_i(\mathbf{W})$.
From now on, we presume all indices $i$ belong to $\mathcal{S}'$. The following optimization problem is non-convex:
\begin{align} \label{MMF-BS1}
& \max_{\mathbf{W}}\; \min_{i \in \mathcal{S}'} \;\; \lambda_{\text{max}} \big(\tilde{\mathbf{P}}_i^{-1/2} \mathbf{Q}_i \tilde{\mathbf{P}}_i^{-1/2}\big) \\
&\text{s.t.} \hspace{0.4cm} \operatorname{Tr}(\mathbf{W}^H\mathbf{W}) \leq P_B. \nonumber
\end{align}
But $\lambda_{\text{max}} \big(\tilde{\mathbf{P}}_i^{-1/2} \mathbf{Q}_i \tilde{\mathbf{P}}_i^{-1/2}\big)$ is in turn bounded by \cite{Meyer2000}:
\begin{equation}
\label{norm_bound_lambda}
\lambda_{\text{max}} \big(\tilde{\mathbf{P}}_i^{-1/2} \mathbf{Q}_i \tilde{\mathbf{P}}_i^{-1/2}\big) \leq ||\tilde{\mathbf{P}}_i^{-1/2} \mathbf{Q}_i \tilde{\mathbf{P}}_i^{-1/2}||
\leq  ||\mathbf{Q}_i||\,||\tilde{\mathbf{P}}_i^{-1/2}||^2.
\end{equation}
Using the fact that $(\mathbf{X}_1 \otimes \mathbf{Y}_1)(\mathbf{X}_2\otimes \mathbf{Y}_2)=(\mathbf{X}_1 \mathbf{X}_2)\otimes (\mathbf{Y}_1 \mathbf{Y}_2)$ and also
 $\operatorname{Tr}(\mathbf{X}\otimes\mathbf{Y})=\operatorname{Tr}(\mathbf{X})\operatorname{Tr}(\mathbf{Y})$, it is easy to see that $||\mathbf{Q}_i||=||\mathbf{a}_i||^2\,||\mathbf{b}_i||^2$ 
 and
\begin{align} \label{bound_P-tilde}
||\tilde{\mathbf{P}}_i^{-1/2}||^2&=\operatorname{Tr}\big((\tilde{\mathbf{P}}_i^{-1/2})^H\tilde{\mathbf{P}}_i^{-1/2}\big)=\operatorname{Tr}\big(\tilde{\mathbf{P}}_i^{-1}\big)
=\sum_{k=1}^{N^2_r} \lambda_k(\tilde{\mathbf{P}}_i^{-1})
=\frac{N^2_r}{\mathcal{H}\big(\boldsymbol{\lambda}(\tilde{\mathbf{P}}_i)\big)}.
\end{align}
Since $0 < \lambda_\text{min}(\tilde{\mathbf{P}}_i) \leq  \mathcal{H}\big(\boldsymbol{\lambda}\big(\tilde{\mathbf{P}}_i)\big) \leq \lambda_\text{max}(\tilde{\mathbf{P}}_i)$, $\mathcal{H}\big(\boldsymbol{\lambda}(\tilde{\mathbf{P}}_i )\big)$ 
may decrease (and $||\tilde{\mathbf{P}}_i^{-1/2}||^2$ increases accordingly) by decreasing $\lambda_\text{max}(\tilde{\mathbf{P}}_i)$. 
Thus, we proceed by decreasing the bound on $\lambda_\text{max}(\tilde{\mathbf{P}}_i)$:
\begin{align}
\lambda_\text{max}(\tilde{\mathbf{P}}_i) \leq ||\tilde{\mathbf{P}}_i||=||\frac{\sigma^2}{P_R}\mathbf{Z}+\mathbf{N}_i+\sum_{\substack{j=1\\j\neq i}}^{N_u}\mathbf{S}_{i,j}+\sum_{\substack{j=1\\j\neq i}}^{N_u}\mathbf{Q}_{i,j}||
,\;\; i \in \mathcal{S}'.
\end{align}
which can be further expanded using triangle inequality and the fact that $||\mathbf{Q}_{ij}||=||\mathbf{a}_i||^2\,||\mathbf{b}_j||^2$:
\begin{align} \label{lambda_max_P-tilde}
\lambda_\text{max}(\tilde{\mathbf{P}}_i) \leq ||\frac{\sigma^2}{P_R}\mathbf{Z}+\mathbf{N}_i+\sum_{\substack{j=1\\j\neq i}}^{N_u}\mathbf{S}_{i,j}||+\sum_{\substack{j=1\\j\neq i}}^{N_u}||\mathbf{Q}_{i,j}|| 
=||\frac{\sigma^2}{P_R}\mathbf{Z}+\mathbf{N}_i+\sum_{\substack{j=1\\j\neq i}}^{N_u}\mathbf{S}_{i,j}||+||\mathbf{a}_i||^2\,\sum_{\substack{j=1\\j\neq i}}^{N_u}||\mathbf{b}_j||^2.
\end{align}
Putting \eqref{norm_bound_lambda}-\eqref{lambda_max_P-tilde} all together we conclude that our optimization problem is:
\begin{align} \label{MMF-BS2}
& \max_{\mathbf{W}}\; \min_{i \in \mathcal{S}'} \;\; \frac{||\mathbf{b}_i||}{\sum_{\substack{j=1\\j\neq i}}^{N_u}||\mathbf{b}_j||} \\
&\text{s.t.} \hspace{0.4cm} \operatorname{Tr}(\mathbf{W}^H\mathbf{W}) \leq P_B. \nonumber
\end{align}
Remember that each $\mathbf{b}_i$ is a function of $\mathbf{W}$. It is not straightforward to determine the convexity of this problem. However, for each user we 
need to maximize $||\mathbf{b}_i||$ while minimizing $||\mathbf{b}_j||,\forall j \neq i$. However, this decreases the upper bound for other users, as an outcome.
Hence, the best case is to equalize $||\mathbf{b}_i||$ for all users to satisfy \eqref{MMF-BS2}:
\begin{align} 
||\mathbf{b}_i||&=||(\mathbf{H}_2\mathbf{w}_i)^T||=\sqrt{\mathbf{w}_i^H\mathbf{H}_2^H\mathbf{H}_2\mathbf{w}_i}=\text{const} ,\;\ \forall i \in \mathcal{S}'.
\end{align}
Since $\mathbf{H}_2^H\mathbf{H}_2$ is Hermitian and thus normal, it is unitarily diagonalizable \cite{Tse05}, i.e.,
$\mathbf{H}_2^H\mathbf{H}_2=\mathbf{U}\mathbf{\Lambda}\mathbf{U}^H ,\;\mathbf{U}^{-1}=\mathbf{U}^H$, 
where $\mathbf{\Lambda}$ is a diagonal matrix whose diagonal entries are eigenvalues of $\mathbf{H}_2^H\mathbf{H}_2$. Let $ \boldsymbol{\Psi}=\operatorname{diag}(\psi_1,\ldots,\psi_{N_u})$ and 
$\mathbf{H}_0=\mathbf{U}\mathbf{\Lambda}^{-1/2}$, then we choose 
$\mathbf{W}\big(\boldsymbol{\Psi}\big)=\mathbf{H}_0\boldsymbol{\Psi}$. Therefore,
$\mathbf{W}^H\mathbf{H}_2^H\mathbf{H}_2\mathbf{W}=\boldsymbol{\Psi}^2$ and also $||\mathbf{b}_i||=\psi_i$. 
The positive real-valued matrix $\boldsymbol{\Psi}$ relates to power allocation at BS, i.e., 
$\operatorname{Tr}(\mathbf{W}^H\mathbf{W})=\operatorname{Tr}(\boldsymbol{\Psi}^2\mathbf{\Lambda}^{-1}) \leq P_B$.
If $\boldsymbol{\Psi}=\mathbf{I}_{N_b}$ then \eqref{MMF-BS2} is solved but not \eqref{MMF-BS1}, since  $\lambda_{\text{max}} \big(\tilde{\mathbf{P}}_i^{-1} \mathbf{Q}_i \big)$  are not equal for all users. In order to
equalize all $\lambda_{\text{max}} \big(\tilde{\mathbf{P}}_i^{-1} \mathbf{Q}_i \big)$s, we now propose a fast convergent method for power allocation at BS.
Let $l$ represent the iteration index. Apparently, positive definite matrices $\mathbf{X}_i(l)=\tilde{\mathbf{P}}_i^{-1}(l)+\mathbf{Q}_i(l)$ change from iteration to iteration 
by varying power distribution  amongst users. It is easy to see:
\begin{align} 
\lambda_\text{max}\big(\tilde{\mathbf{P}}_i^{-1}(l)\mathbf{Q}_i(l)\big)&=
\lambda_\text{max}\big(\tilde{\mathbf{P}}_i^{-1}(l)\mathbf{X}_i(l)\big)-1=
\frac{1}{\lambda_\text{min}\big(\tilde{\mathbf{X}}_i^{-1}(l)\mathbf{P}_i(l)\big)}-1 \nonumber\\=
&\frac{1}{1-\lambda_\text{max}\big(\tilde{\mathbf{X}}_i^{-1}(l)\mathbf{Q}_i(l)\big)}-1=
\frac{\lambda_\text{max}\big(\mathbf{X}_i^{-1}(l)\mathbf{Q}_i(l)\big)}{1-\lambda_\text{max}\big(\mathbf{X}_i^{-1}(l)\mathbf{Q}_i(l)\big)}.
\end{align}
We know that rank-1 matrix $\tilde{\mathbf{P}}_i^{-1}(l)\mathbf{Q}_i(l) \succeq 0$, then $\lambda_\text{max} \big(\tilde{\mathbf{P}}_i^{-1}(l)\mathbf{Q}_i(l)\big) \geq 0$. The same holds for $\mathbf{X}_i^{-1}(l)\mathbf{Q}_i(l)$, therefore it can be concluded that $0 \leq \lambda_\text{max}\big(\mathbf{X}_i^{-1}(l)\mathbf{Q}_i(l)\big) < 1$. 
The relation between these eigenvalues is represented in Figure \ref{fig:lambda}. Qualitatively speaking, in each iteration we need to decrease the power of users who have higher SINR and do the opposite for users have lower SINR. By doing so, after some iterations all users are expected to have the same SINR. It appears that if we find $\boldsymbol{\Psi}(l)$ such that equalizes $\mathbf{X}_i^{-1}(l)\mathbf{Q}_i(l)$, we have also equalized the upper bound on SINR for all users in the downlink case. Let $\psi_i(l)=\lambda^{-1}_\text{max}\big(\mathbf{X}_i^{-1}(l)\mathbf{Q}_i(l)\big)$ be $i^\text{th}$ diagonal entry of  $\Psi(l)$ and $\mathbf{W}(l)$ be the BS precoder in the $l^\text{th}$ iteration, then
\begin{align} \label{Wt_closed_form}
\mathbf{W}(l)=\sqrt{\frac{P_B}{\operatorname{Tr}\big((\mathbf{W}'(l))^H\mathbf{W}'(l)\big)}}\;\mathbf{W}'(l) ,\; \mathbf{W}'(l)=\mathbf{W}(l-1)\boldsymbol{\Psi}(l-1).
\end{align}
It is significant to notice that the choice of $\mathbf{H}_0=\mathbf{U}\mathbf{\Lambda}^{-1/2}$ is rather heuristic and does not guarantee to increase the SINR for any given channels since it only aims at increasing the minimax upper bound on $\lambda_{\text{max}} \big(\tilde{\mathbf{P}}_i^{-1} \mathbf{Q}_i\big)$. 
Some cases may still exist, depending on the channel, in which BS precoding may either worsen SINR or achieve no benefit. In
such cases the BS precoding reduces to a simple power allocation of \eqref{Wt_closed_form} with $\mathbf{H}_0=\mathbf{I}_{N_b}$. 
We examine this with the harmonic mean of maximum eigenvalues of all downlink users. Algorithm \ref{Wt_algorithm} summarizes our proposed method for finding a sub-optimal solution for \eqref{MMF-BS1}. The resulting $\mathbf{W}$
is then used as fixed input for LM algorithm to find near-optimal solution of \eqref{MMF-RS2}.
\section{Simulation Results and Complexity Analysis} \label{sec:simulation}
To justify the proposed method, we have performed simulations for 1000 different realizations of channel matrices. The presented results are averaged over all attained data. We have assumed that 
$N_r=6,~ N_u=N_b=3,~ P_R=P_B=P_U=10^{1.5}$ and fixed $\sigma_R=1$ while varying $\sigma$ from $0.01$ to $1.5$. 
The underlying channel is Rayleigh fading which is generated similar to \cite{Narasimhan06}. First, channel matrices $\mathbf{H}_1$ and $\mathbf{H}_2$, whose entries are i.i.d Gaussian random variables with zero means and unit variances, are generated. Then, in order to create correlation between different links they are manipulated by
$\mathbf{H}_1=\mathbf{R}_{RS}^{1/2}\,\mathbf{H}_1\,\mathbf{R}_{MS}^{1/2}$ and $\mathbf{H}_2=\mathbf{R}_{RS}^{1/2}\,\mathbf{H}_2\,\mathbf{R}_{BS}^{1/2}$
where $[\mathbf{R}_{BS}]_{ij}=(\rho_{BS})^{|i-j|}$, $[\mathbf{R}_{RS}]_{ij}=(\rho_{RS})^{|i-j|}$ and $[\mathbf{R}_{MS}]_{ij}=(\rho_{MS})^{|i-j|}$. We have chosen values 
$\rho_{BS}=0.6172$, $\rho_{RS}=0.5883$ and $\rho_{MS}=0.1$.
\newline
We have solved the RS precoding optimization \eqref{MMF-RS2} using LM-bisection and also by SDP-bisection.
These are performed twice, i.e., first before BS precoding and then after precoding at BS using Algorithm \ref{Wt_algorithm}. Also as a reference the alternating method is simulated with number of iterations
of $N_a=6$. Even this method does not result in the optimal solution to the joint optimization method since one of the problems, i.e., RS precoding, is non-convex and so is the whole problem.
At the end we choose the best achieved rate amongst $N_a$ iterations regardless of which iteration yields so. 
\newline
Figure \ref{rate_fig} shows the achievable rate for the worst user in different methods along with minimax upper bound of \eqref{minimax_upper_bound}. Even though there exits no guarantee to achieve upper bound  in general, we observe in the figure that this bound is really tight in high SNR situations. Simulations admit that our proposed methods works well by handing in a relatively close rate to the alternating algorithm. Anyhow, the optimal solution is hard to find due to non-convexity of joint optimization problem. 
Figure \ref{iter_Wt_fig} illustrates that the average number of iterations for finding $\mathbf{W}$ is between $7-10$ depending on SNR value. This well confirms that Algorithm \ref{Wt_algorithm} is a fast method. Similarly, Figure \ref{LM_iter_fig} portrays the number of iterations in LM-bisection method. With an identical number of iterations, LM-bisection after BS precoding has a bigger gap with minimax bound compared to the case without precoding. Most probably, the reason is that after performing BS precoding the initial precoder (in Levenberg-Marquardt method), that we acquire from Proposition \ref{proposition_minimax_upper_bound},  is further away from the optimal beamformer in comparison with the case $\mathbf{W}=\mathbf{I}_{N_b}$.
\newline
Table \ref{complexity_table} compares the complexity of the proposed with the alternating method. In the table, $N_w$ and $N_{LM}$ refer to the number of iterations in Algorithm \ref{Wt_algorithm} and LM-bisection, respectively. Also in our simulation $\epsilon_{SDP}=1.489 \cdot 10^{-8}$ is used.
Figure \ref{rate_fig} along with Table \ref{complexity_table} show that the proposed method offers a good compromise between achieved worst-case rate and complexity. 
In the worst case (in the strongest noise conditions) the achieved rate of our method ($0.66~bits/sec/Hz$) amounts to $78\%$ of the latter. 

\section{Conclusion} \label{sec:conclusion}
This paper proposes a novel non-alternating low-complexity method for joint optimization of BS-RS precoders design. To find the near-optimal beamformer of RS, the  minimax upper bound 
is first found whose corresponding beamformer is then used as starting point of iterative Levenberg-Marquardt algorithm. The main idea of this algorithm as shown in \cite{DartmannTVT13} is to avoid any convex 
solver after converting the original non-convex problem to a set of nonlinear equations. The solution always converges to a local optimum. Later a fast iterative method is presented to find a sub-optimal solution for BS precoder. 
Indeed, two subproblems are isolated at the expense of small loss of rate in comparison with alternating approach. Simulations show that Algorithm \ref{Wt_algorithm}
always converges fast.

\section{Acknowledgement}
Authors would like to thank Dipl.-Ing. Gholamreza Alirezaei for his kind support and technical advice during this research.

\bibliographystyle{IEEEtran}
\bibliography{arxiv}

\newpage

 \begin{figure}[t]
 \begin{center}
     \includegraphics[width=\columnwidth]{}
     \caption{\small System setup of the considered network consisting of a multiple antenna BS, a multiple antenna bidirectional RS and multiple users.}\vspace{-0.2cm}
     \label{system_fig}
 \end{center}
\end{figure}

 \begin{figure}[t]
 \begin{center}
     \includegraphics[width=\columnwidth]{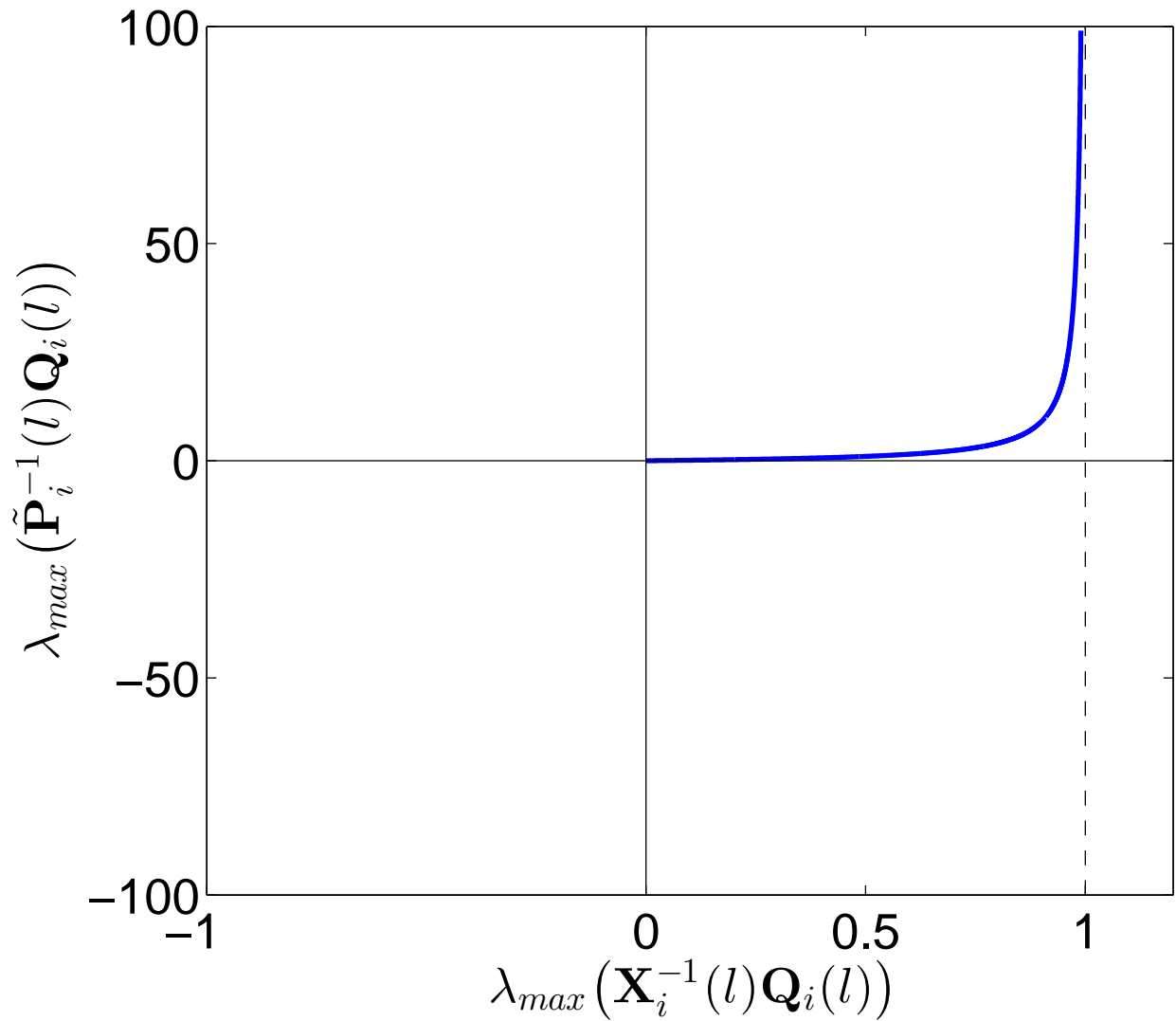}
     \caption{\small Maximum eigenvalue of matrix $\tilde{\mathbf{P}}_i^{-1}(l)\mathbf{Q}_i(l)$ against different values of $\lambda_\text{max} \big(\mathbf{X}_i^{-1}(l)\mathbf{Q}_i(l)\big)$. Note that both matrices are rank-1 and positive semidefinite.}\vspace{-0.2cm}
     \label{fig:lambda}
 \end{center}
\end{figure}

 \begin{figure}[t]
 \begin{center}
     \includegraphics[width=\columnwidth]{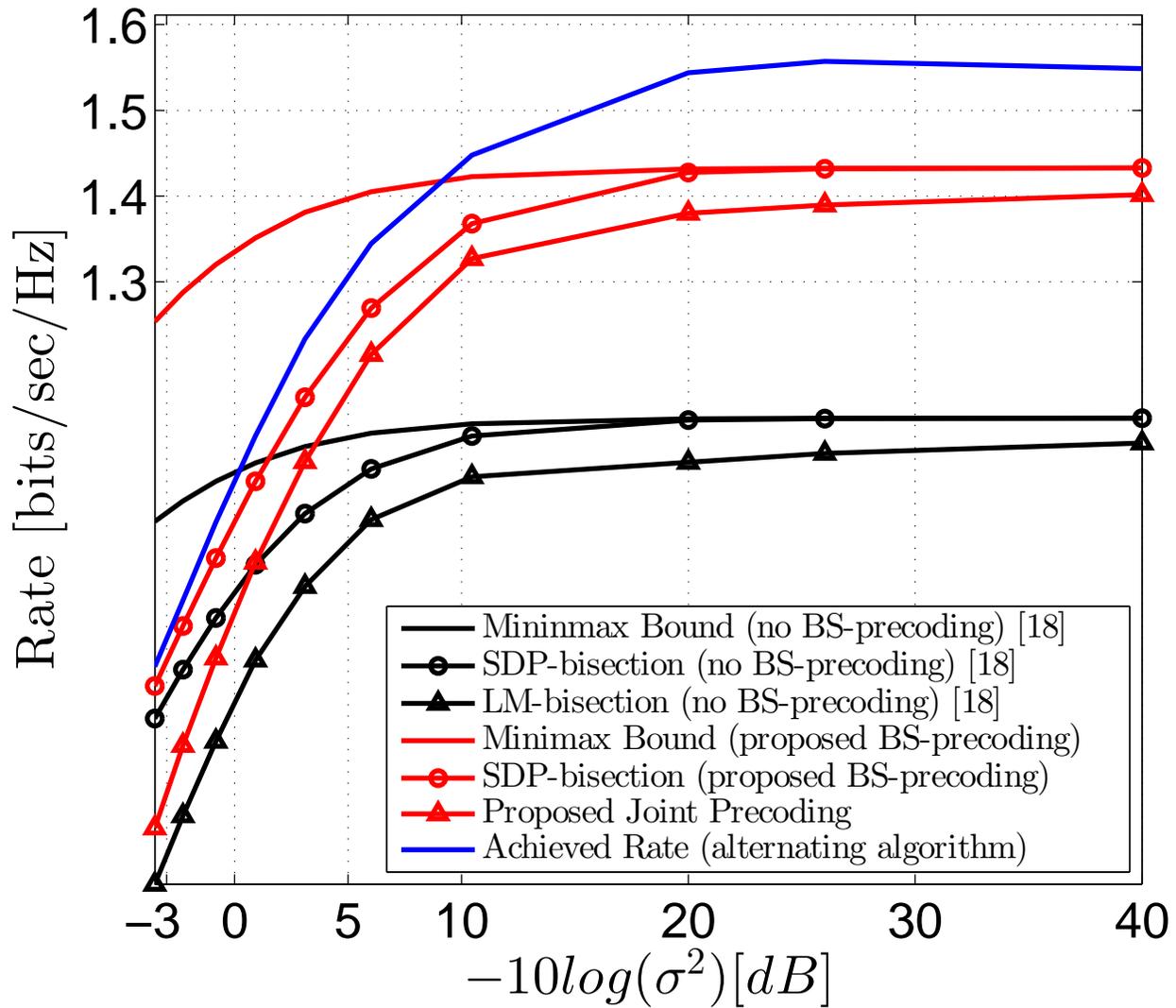}
     \caption{\small The minimum achievable rate amongst all downlink and uplink in different values of SNR.}
     \label{rate_fig}
 \end{center}
\end{figure}


 \begin{figure}[t]
 \begin{center}
     \includegraphics[width=\columnwidth]{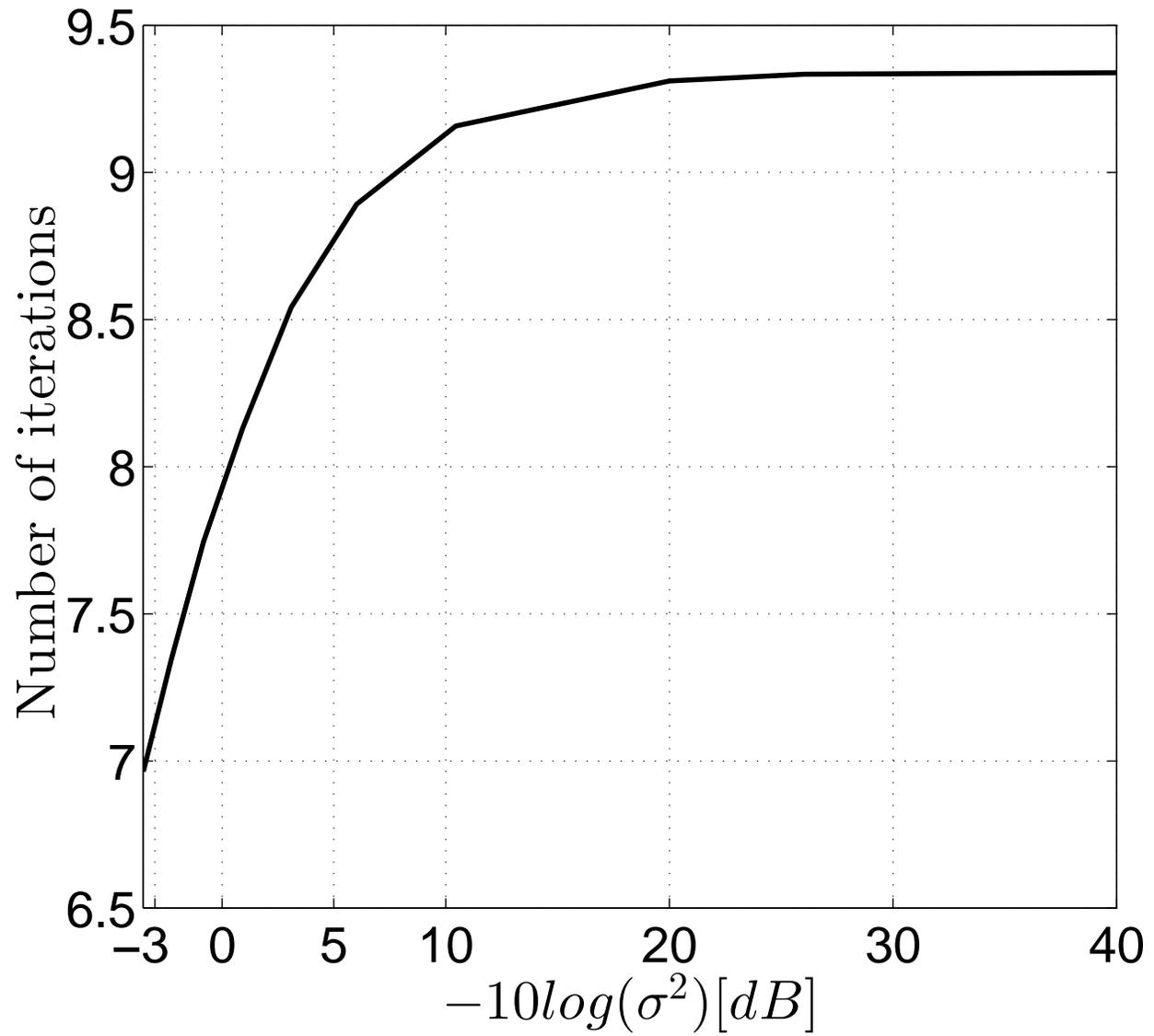}
     \caption{\small Number of iterations to find $\mathbf{W}$ using Algorithm \ref{Wt_algorithm} in different values of SNR.}
     \label{iter_Wt_fig}
 \end{center}
\end{figure}

 \begin{figure}[t]
 \begin{center}
     \includegraphics[width=\columnwidth]{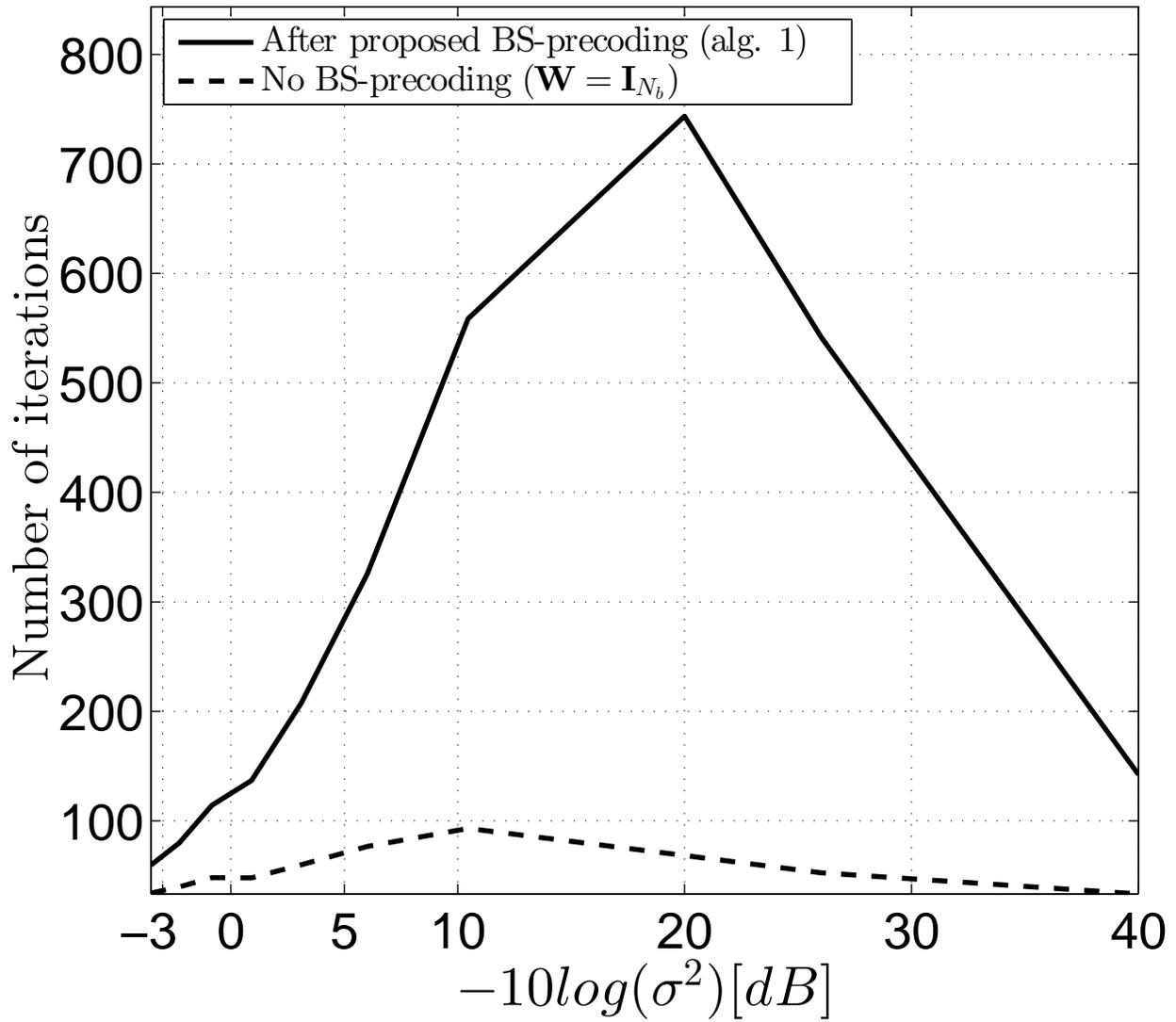}
     \caption{\small Total number of iterations in LM-bisection method versus SNR. Here the number of iterations is calculated by adding up number of iterations in each bisection step.}
     \label{LM_iter_fig}
 \end{center}
\end{figure}


\begin{algorithm} 
 \caption{Iterative Algorithm for finding sub-optimal solution of \eqref{MMF-BS1}}
\label{Wt_algorithm}
\begin{algorithmic}  
\STATE {\bf initialization:} $\boldsymbol{\Psi}(0)=\sqrt{\frac{P_B}{N_b}}\;\mathbf{I}_{N_b}$, $l \leftarrow 0$, $i \in \mathcal{S}'$ \\
\begin{align*}
&A_0=\mathcal{H}\big(\lambda_\text{max}(\tilde{\mathbf{P}}^{-1}_i \mathbf{Q}_i)\big),\;\mathbf{W}(0)=\sqrt{\frac{P_B}{N_b}}~~\mathbf{I}_{N_b} \\
&A_1=\mathcal{H}\big(\lambda_\text{max}(\tilde{\mathbf{P}}^{-1}_i \mathbf{Q}_i)\big),\;\mathbf{W}(0)=\sqrt{\frac{P_B}{\operatorname{Tr}(\boldsymbol{\Lambda}^{-1})}}~~ \mathbf{U}\mathbf{\Lambda}^{-1/2}.
\end{align*}
\IF {$A_0 \geq A_1$} 
\STATE $\mathbf{W}(0)=\sqrt{\frac{P_B}{N_b}}~~\mathbf{I}_{N_b}$ 
\ELSE
\STATE $\mathbf{W}(0)=\sqrt{\frac{P_B}{\operatorname{Tr}(\boldsymbol{\Lambda}^{-1})}}~~ \mathbf{U}\mathbf{\Lambda}^{-1/2}$
\ENDIF 
\WHILE{convergence}
\STATE $l \leftarrow l+1$ 
\STATE find $\mathbf{W}(l)$ with \eqref{Wt_closed_form} and update $\mathbf{Q}_i(l)$, $\tilde{\mathbf{P}}^{-1}_i(l)$ and $\mathbf{X}_i(l)$
\ENDWHILE 

\STATE {\bf return $\mathbf{W}$}.
\end{algorithmic}
\end{algorithm}

\begin{table}
\begin{center}
 \caption{Comparison of the complexity of the proposed algorithm with alternating method.} \label{complexity_table}
\begin{tabular}{ |c|c|c|c| }
\hline
Method &RS precoding & BS precoding & Total (for $N_r \geq N_b$)\\ \hline
\multirow{2}{*}{Alternating \cite{SIDIRO2006}}
 & SDP-bisection:  & SDP-bisection:  & $N_a.\,\mathcal{O}\big((N^{2}_r)^6\big) \times$\\
 & $\mathcal{O}\big((N^{2}_r)^6\big).\mathcal{O}\big(\sqrt{2N_b}N_r\log (1/\epsilon_{SDP})\big)$ & $\mathcal{O}\big((N^{2}_b)^6\big).\mathcal{O}\big(\sqrt{N_b}N_b \log (1/\epsilon_{SDP})\big)$ 
 &  $\mathcal{O}\big(\sqrt{2N_b}N_r\log (1/\epsilon_{SDP})\big)$ \\ \hline
\multirow{2}{*}{Proposed} 
 & LM-bisection \cite{DartmannTVT13}: & Algorithm \ref{Wt_algorithm}: &  \multirow{2}{*}{} \\
 & $N_{LM}.\mathcal{O}((N^{2}_r)^3)$ & $N_w.\mathcal{O}((N^2_r)^3)$  & $(N_w+N_{LM}).\mathcal{O}((N^{2}_r)^3)$ \\ \hline
\end{tabular}
\end{center}
\vspace{-1cm}
\end{table}
\end{document}